\newcommand{\braket}[2]{\left\langle #1 \vphantom{#2} \right.\left| \vphantom{#1} #2 \right\rangle }
\newcommand{\bra}[1]{\left\langle #1  \right|}
\newcommand{\ket}[1]{\left| #1 \right \rangle}
\newtheorem{thm}{Theorem}[section]
\newtheorem{theorem}[thm]{Theorem}
\newtheorem{corollary}[thm]{Corollary}
\newtheorem{remark}[thm]{\it Remark}
\begin{document}

\title{Discrete integrable  principal chiral field model  and its involutive reduction}

\keywords{Discrete integrable systems, discrete chiral models, discrete $\sigma$-models, discrete Ernst equation}

\author[J.L. Cieśliński]{Jan L. Cieśliński}
\address{J.L. Cieśliński, Wydział Fizyki, Uniwersytet w Białymstoku, ul. Ciołkowskiego 1L, 15-245 Białystok, Poland}
 \email{jancieslinski@gmail.com}
 
\author[A.V. Mikhailov]{Alexander V. Mikhailov}
\address{A.V. Mikhailov, Department of Applied Mathematics, University of Leeds, Leeds LS2 9JT, UK}
 \email{A.V.Mikhailov@leeds.ac.uk} 

\author[M. Nieszporski]{Maciej Nieszporski}
\address{Maciej Nieszporski, Wydział Fizyki,
Uniwersytet Warszawski, Pasteura 5, 02-093, Warsaw, Poland}
 \email{Maciej.Nieszporski@fuw.edu.pl}

\author[F.W. Nijhoff]{Frank W. Nijhoff}
\address{F.W. Nijhoff, Department of Applied Mathematics, University of Leeds, Leeds LS2 9JT, UK}
 \email{frank.nijhoff@gmail.com}

\keywords{Discrete integrable systems, discrete chiral models, discrete $\sigma$-models, discrete Ernst equation}

\begin{abstract}
We discuss 
an integrable discretization of the principal chiral field models equations   and its involutive reduction.
We present a Darboux transformation  and general construction of solution solutions for these discrete equations.
\end{abstract}

\maketitle

{\em Dedicated to the memory of David Kaup.}

\section{Introduction}

Given a differential equation of mathematical physics that is integrable one can search for a difference (discrete)
equation such that, first, in a continuum limit goes to the differential one, second, is integrable as well.
This article can be viewed as the first step in a larger program of integrable discretizations of chiral field model equations (or  $\sigma$-models equations). Therefore before we summarize the results of this article
we would like to  glance at the problem from a broad perspective and to present our motivations first. 
Namely, by the non-autonomous (or non-isospectral) extension of principal $GL(N)$ chiral field model equations  we understand the equation
\begin{equation} \label{ContinuousR}
(r  \Phi,_v \Phi^{-1}),_u+(r \Phi,_u \Phi^{-1})),_v=0, \qquad  r,_{uv} =0,
\end{equation}
where  $\Phi$ is a function of two independent  $u$ and $v$ variables  (in the hyperbolic case $u$ and $v$ are real variables while in the elliptic case $u$ and $v$ are treated as complex variables such that one variable is complex conjugation of the second)
that takes values in invertible square matrices of given size $N$ with complex entries, while $r$ is a scalar (in general complex valued) function of the independent variables. 

The chiral models (or nonlinear $\sigma$ models) appeared in theoretical physics  in sixties of the previous century in two classical articles of particle physics \cite{Gur,Gel}.
The history of these models in mathematics is even longer for in the beginning of the previous century \cite{Bia} Bianchi constructed B\"acklund transformation for the system
\begin{equation}
\label{Bianchi}
{\mathbf N},_{uv}=f {\mathbf N}, \quad {\mathbf N} \cdot {\mathbf N}=U(u)+V(v), 
\end{equation}
where ${\mathbf N}$ is a vector valued  function of variables $u$ and $v$, dot denotes (pseudo)-scalar product and $U(u)$ and $V(v)$ are prescribed functions of indicated variable. Nowadays  Bianchi system~(\ref{Bianchi}) is referred to as $O(p,q)$ $\sigma$-model ($(p,q)$ is the signature of the dot product). In the case $p=3$, $q=0$ the Bianchi system is equivalent to an involutive reduction of 
 $U(2)$ version of model (\ref{ContinuousR})
and (in the case $p=2$, $q=1$) to hyperbolic Ernst equation of general relativity.
Indeed, it turns out the following two constraints are compatible with
equations~(\ref{ContinuousR}) 
\begin{enumerate}
\item  involutive constraint
\begin{equation}
\begin{array}{c}
\Phi^2={\mathbb I}
\end{array}
\end{equation}
which gives after the substitution $\Phi={\mathbb I}-2 P$  idempotent constraint on $P$ i.e. $P^2=P$.
\item  (pseudo)-unitary constraint 
\begin{equation}
\label{u}
\begin{array}{cccccc}
\Phi^{\dagger} \eta \Phi \eta ={\mathbb I} &
~~~~~&\eta:=diag &(  
 \underbrace{1,...,1} ,&  
 \underbrace{-1,...,-1})   
 \\
&&&i&j
\end{array}
\end{equation}
\end{enumerate}
In the case $N=2$ the constraints give 
\[ \Phi=\frac{1}{\sqrt{r}}\begin{bmatrix}
   n_1 &  n_2+i n_3 \cr \epsilon (n_2-i n_3)  & -n_1
  \end{bmatrix},   n_1^2+\epsilon (n_2^2+n_3^2)=r, \qquad\epsilon=\pm 1
\]
 and  equation (\ref{ContinuousR}) becomes equation  (\ref{Bianchi}) for ${\mathbf N}:= [n_1,n_2,n_3]$,
where ${\mathbf N} \cdot {\mathbf N}:=n_1^2+\epsilon (n_2^2+n_3^2)=r$.
After the stereographical projection $n_2+i n_3=\sqrt{r} \frac{2 \xi}{1+\epsilon \bar{\xi} \xi}$, 
 $n_1=\sqrt{r} \frac{1-\epsilon \bar{\xi} \xi}{1+\epsilon \bar{\xi} \xi}$ we come to the equation
\begin{equation}
\label{e}
\begin{array}{c}
(\bar{\xi} \xi+\epsilon) 
(\xi,_{uv}+\frac{r,_v}{2r} \xi,_u+\frac{r,_u}{2r} \xi,_v)= 2 \bar{\xi}  \xi,_u \xi,_v
\\
r,_{uv}=0
\end{array}
\end{equation} 
which is  (in the elliptic case and $\epsilon=-1$) the Ernst equation i.e. the equation
the problem 
of finding vacuum solutions of Einstein equations outside  axisymmetric and stationary
distribution of matter can be reduced to \cite{Ern}.
A bit more tedious calculations shows that in the case $N=3$ and $U(1,2)$ and involutive reduction leads to
the Ernst Einsten-Maxwell equations \cite{ErnEM}
\begin{equation}
\label{em}
\begin{array}{c}
(\bar{\xi} \xi- \bar{\chi} \chi-1) 
(\xi,_{uv}+\frac{r,_v}{2r} \xi,_u+\frac{r,_u}{2r} \xi,_v)= 
\xi,_u (\bar{\xi} \xi,_v-\bar{\chi}\chi,_v)+\xi,_v(\bar{\xi} \xi,_u-\bar{\chi}\chi,_u)
\\
(\bar{\xi} \xi- \bar{\chi} \chi-1) 
(\chi,_{uv}+\frac{r,_v}{2r} \chi,_u+\frac{r,_u}{2r} \chi,_v)=
\chi,_u(\bar{\xi}\xi,_v-\bar{\chi}\chi,_v)+\chi,_v (\bar{\xi}\xi,_u-\bar{\chi}\chi,_u)
\\
r,_{uv}=0.
\end{array}
\end{equation}
 Electro-magnetic field outside the distribution of matter
manifests in this description as an additional complex variable $\chi$.

The essential fact from the point of view of this article is that the differential equations presented above are integrable.
There is a large number of publications on the topic. We confine ourselves to listing 
the most important ones
\begin{itemize}
\item In 1905 \cite{Bia} Bianchi published B\"acklund transformation for the system (\ref{Bianchi})
\item
In 1972  
Geroch found an infite-dimensional group (the Geroch group $K$) equation (\ref{e}) is covariant under \cite{Ger}. Later on Kinnersley extended Geroch's result to electrovacuum equations
(\ref{em}) finding a covariance group (the Kinnersley group $K'$) for equations~(\ref{em})~\cite{Kin}.
\item
In 1978 the Ernst equations became
a part of `soliton' theory \cite{Mas,BeZa1,Har,Neu,OMW,Ale1}.
\item 
In 1978 Zakharov and Mikhailov  established the general scheme for integration chiral fields and their reductions \cite{MiZa,Jan1,Sasha}.
\end{itemize}
There are partial results in 
discretization of system (\ref{ContinuousR}) and its reductions, 
 we are going to recall them now.
\begin{itemize}
\item To the best of our knowledge, an integrable discretization of autonomous (or isospectral) ($r=const$) version of Bianchi system
(\ref{Bianchi}) was given by Orfanidis \cite{Orf}  and reads
\begin{equation}
\label{iso-d}
T_1T_2\vec{n}+ \vec{n} = F (T_1\vec{n}+ T_2 \vec{n}) \qquad \vec{n}\cdot \vec{n}=1
\end{equation}
Where  this time the dependent variables are $\vec{n}:   {\mathbb Z}^2  \mapsto T\mathbb{E}^n$ (where $T\mathbb{E}^n$ denotes tangent space of (pseudo)-Euclidean  space), 
$F:   {\mathbb Z}^2  \mapsto \mathbb{R}$, 
is a scalar function that can be expresses in terms of $\vec{n}$ as follows $F=\frac{2 \vec{n}\cdot (T_1\vec{n}+ T_2 \vec{n}) }{(T_1\vec{n}+ T_2 \vec{n}) \cdot (T_1\vec{n}+ T_2 \vec{n}) }$.
Equation (\ref{iso-d}) can be written in the case $n=3$ as
\begin{equation}
\label{s2}
\Delta_1 \left(\Delta_2 (\Phi) \Phi\right)+\Delta_2 \left(\Delta_1 (\Phi) \Phi \right)=0, \quad \Phi^{\dagger} \Phi={\mathbb I}, \quad \Phi^2={\mathbb I}
\end{equation}
where $\Phi$ is 2x2 matrix such that 
\[\Phi:=\begin{bmatrix}n_0 & n_1+in_2\cr n_1-in_2 & -n_0\end{bmatrix}, \qquad \vec{n}:=(n_0,n_1,n_2).\]
\item 
As for discretization of the non-isospectral (non-autonomous) case only two results are known so far.
The first one is the system introduced by Schief \cite{Sch}
\begin{equation}
\label{sch}
\begin{array}{c}
T_1T_2\vec{N} + \vec{N} = F  (T_1\vec{N} + T_2\vec{N} ),\\
\Delta_1 \Delta_2 [(T_1T_2\vec{N} + \vec{N} )\cdot (T_1\vec{N} + T_2\vec{N} )]= 0,
\end{array}
\end{equation}
as a permutability theorem of Calapso equation. It has been demonstrated that Schief's
system can be regarded as discretization of Bianchi system (\ref{Bianchi}) \cite{NDS,DNS,DNS1}
but not the only one. The second result is that the system (\ref{sch}) is a potential version
of the system \cite{DNS}
\begin{equation}
\label{dns}
\begin{array}{c}
\frac{T_1T_2\vec{\mathcal N} }{T_1T_2\sqrt{F} }+ \frac{\vec{\mathcal N} }{\sqrt{F}} = T_1\sqrt{F}  T_1\vec{\mathcal N} + T_2\sqrt{F} T_2\vec{\mathcal N}, \\
\Delta_1 \Delta_2 (\vec{\mathcal N} \cdot \vec{\mathcal N})= 0.
\end{array}
\end{equation}
The solution spaces (\ref{sch}) and (\ref{dns}) are related by
\[\vec{\mathcal N}=\sqrt{F}(T_1\vec{N} + T_2\vec{N} ).\]
\item Autonomous discrete principal GL(n) $\sigma$-models has been introduced
by Cherednik~\cite{Che} and read
\begin{equation}
\Delta_1 \left(\Delta_2 (\Phi) \Phi^{-1}\right)+\Delta_2 \left(\Delta_1 (\Phi) \Phi^{-1}\right)=0.
\end{equation}
A Lax pair for the equation and discussion on some integrable  features can be found in \cite{NiSm,Bog}.
\end{itemize}
In the present article we add to the list the system
\begin{equation}
\label{chep2C}
\Delta_1 \left(\Delta_2 (\Phi) \Phi\right)+\Delta_2 \left(\Delta_1 (\Phi)  \Phi\right)=0, \quad \Phi^2={\mathbb I},
\end{equation}
which is a discrete integrable version of  the autonomous ($r=1$), hyperbolic version of equation (\ref{ContinuousR}), subjected to involutive costraint.
We also discuss equation~(\ref{s2})  (i.e. system~(\ref{chep2C}) subjected to unitary constraint) in the case of 2x2 matrices,
and show  that system  matrices~(\ref{chep2C})  of larger size than 2, in contrast to the continuous case, do not admit unitary reduction.

\section{The classical principal chiral model and involutive reduction}

We study one possible integrable discretization of the classical principal chiral field model:
\begin{equation} \label{Continuous}
( \Phi ,_u \Phi^{-1}),_v + ( \Phi,_v \Phi^{-1}), _u = 0, \qquad \Phi \in GL(N, \mathbb{C}).
\end{equation}
where $\Phi$ is $GL(N, \mathbb{C})$-valued function of two independent real variables $u$ and $v$.

The integrability of the principal chiral field model was established in \cite{MiZa}, where equation~\eqref{Continuous} was represented as the compatibility condition of two linear problems:
\begin{equation} \label{LPCi}
\Psi,_u= \frac{{\mathcal A}}{\lambda-1} \Psi, \quad \Psi,_v= \frac{\mathcal B }{\lambda+1} \Psi.
\end{equation}
Here  $\lambda$ is a spectral parameter.

The compatibility conditions of (\ref{LPCi}) read
\[{\mathcal A},_v-{\mathcal B} ,_u=0, \quad {\mathcal A},_v+{\mathcal B },_u+{\mathcal A} {\mathcal B } -{\mathcal B } {\mathcal A}=0,\]
and guarantee the existence of a common fundamental solution $\Psi(u,v;\lambda)$.
From the second equation we infer that there exists potential $\Phi$ such that
\[{\mathcal A}=-\Phi,_u\Phi^{-1}, \qquad {\mathcal B }=\Phi,_v\Phi^{-1}.\]
Substituting these expressions into the first equation yields equation \eqref{Continuous}. The potential $\Phi$ can be identified with $\Psi(u,v;0)$.

Equation \eqref{Continuous} can naturally be reduced to any complex or real classical Lie group \cite{MiZa_cmp}, assuming that the variables ${\mathcal A}, {\mathcal B }$ are elements of the corresponding Lie algebra. In \cite{MiZa}, it was shown that the principal chiral field model \eqref{Continuous} admits a remarkable (see the introduction) integrable reduction:
\begin{equation}
\label{i}
\Phi^2 =  {{\mathbb I}}.
\end{equation}
We refer to this as the involutive reduction or projective (idempotent) reduction, since condition \eqref{i} implies the representation $\Phi = {\mathbb I} - 2P$, where $P$ is a projector ($P^2 = P$). In terms of $P$, equation \eqref{Continuous} takes the elegant form \cite{MiZa}:
\[
 [P_{u v},P]=0.
\]

In this paper we study an integrable discretization
\begin{equation}
\label{che}
\Delta_1 ((\Delta_2 \Phi)\Phi^{-1})+\Delta_2 ((\Delta_1 \Phi) \Phi^{-1})=0,
\end{equation}
of the principle chiral model \eqref{Continuous},
where $\Phi$ is $GL(N, \mathbb{C})$-valued function of two discrete variables $(m_1,m_2)\in\mathbb{Z}^2$
and $\Delta_i$ denotes forward difference operator in variable $m_i$, i.e. $\Delta_1 f(m_1,m_2):=f(m_1+1,m_2)-f(m_1,m_2)$, 
$\Delta_1 f(m_1,m_2):=f(m_1,m_2+1)-f(m_1,m_2)$, for any function $f:\mathbb{Z}^2\mapsto\mathbb{C}$. In section \ref{sec2} we construct explicitly two poles  Darboux-B\"acklund transformations for equation \eqref{Continuous}.
In section \ref{sec3} we show that the involutive reduction $\Phi^2={\mathbb I}$ is compatible with the discrete system \eqref{Continuous}
 (a constraint is compatible with an equation  iff its  imposition on initial conditions
implies that due to the equation the constraint holds at every point of the domain), leading to an additional symmetry of the Lax representation and the corresponding fundamental solution. In section \ref{sec31} we construct a two poles Darboux matrix and the corresponding soliton solution of the reduced   system
\begin{equation}
\label{chep2}
\Delta_1 ((\Delta_2 \Phi) \Phi )+\Delta_2 ((\Delta_1 \Phi) \Phi )=0,\quad \Phi^2={\mathbb I},
\end{equation}
where $\Phi\in GL(N, \mathbb{C})$. This solution is parameterized by two points on the complex Grassmannian $G_{k,N}$ and two complex parameters $\lambda_0$ and $\mu_0$ (Corollary 2). In the  case $N=2$ solutions for the system (\ref{chep2}) are presented in Section \ref{sec32}. This particular case ($N=2$)
admits the unitary reduction $\Phi^{-1}=\Phi^{\dagger}=\Phi$ and solutions for the system
(\ref{chep2}) subjected  unitary constraint $\Phi^{\dagger} \Phi ={\mathbb I} $ are presented in Section \ref{unitary}.
In this case the solution obtained is parameterized by one point on the $CP^1$ and one complex 
parameter $\lambda_0$.

\section{The $GL(n)$ principal $\sigma$-models in the discrete case}
\label{sec2}
We start our considerations from the pair of linear equations (the so called Lax pair) 
on   function $\Psi (m_1,m_2;\lambda)$  \cite{NiSm}
(two independent variables are integers and are omitted to make formulae shorter)
\begin{equation}
\label{Lp}
T_1 \Psi (\lambda) = \left( {\mathbb I} + \frac{1}{\lambda-1}{\mathbb A} \right) \Psi (\lambda)\qquad 
T_2 \Psi (\lambda) = \left( {\mathbb I} + \frac{1}{\lambda+1}{\mathbb B} \right)\Psi(\lambda)
\end{equation}
where $\lambda$ is a  complex valued parameter referred to as spectral parameter and in general 
can depend on $m_1$ and $m_2$,
however, in the present paper we confine ourselves 
to the isospectral case, i.e. to the case when  $\lambda$ does not depend on the independent variables.
We assume that the square matrices ${\mathbb A}$ and ${\mathbb B}$ do not depend on $\lambda$ and do depend on $m_1$ and $m_2$,
 ${\mathbb I}$ denotes the unit matrix
and we recall that  $T_j$ ($j=1,2$) denotes forward shift operators: $T_1\Psi (m_1,m_2):=\Psi (m_1+1,m_2)$, $T_2\Psi (m_1,m_2):=\Psi (m_1,m_2+1)$.
The function  $\Psi $
is a  fundamental matrix solution of the system (\ref{Lp}) so it takes values in square matrices.
Shortly, 
 we have \[T_1\Psi (\lambda)= U (\lambda) \Psi(\lambda),  \qquad T_2 \Psi(\lambda) = V (\lambda) \Psi(\lambda),\] 
where $U$ and $V$ are characterized as having simple poles at $\lambda = 1$, and $\lambda=-1$, respectively, and
\begin{equation} \label{limy}
\lim_{\lambda \rightarrow \infty} U (\lambda) = {\mathbb I} \ , \qquad 
\lim_{\lambda \rightarrow \infty} V (\lambda) = {\mathbb I} \ .
\end{equation} 
We want the compatibility conditions of system (\ref{Lp})
\begin{equation}
(T_2 {\mathbb A} -{\mathbb A}-T_1 {\mathbb B}  +{\mathbb B}) \lambda +(T_2 {\mathbb A}-{\mathbb I}) ({\mathbb B}+{\mathbb I}) -(T_1 B+{\mathbb I})  ({\mathbb A}-{\mathbb I})=0.
\end{equation}
be valid for all $\lambda \in \bar{\mathbb C}\backslash \{1,-1\}$, so we arrive at
\begin{equation}
\label{a}
\Delta_2 {\mathbb A} = \Delta_1 {\mathbb B} 
\end{equation}
\begin{equation}
\label{m}
[T_2({\mathbb A}-{\mathbb I})] ({\mathbb B}+{\mathbb I}) =[T_1({\mathbb B}+{\mathbb I})]  ({\mathbb A}-{\mathbb I})
\end{equation}
Due to equation (\ref{m}) we can introduce\footnote{Due to (\ref{a}), one can introduce `potential' $\Gamma$ via ${\mathbb A}=\Delta_1 \Gamma+{\mathbb I}, \, {\mathbb B}=\Delta_2 \Gamma-{\mathbb I}$ and, plugging it in (\ref{m}), get 
$T_2(\Delta_1 \Gamma)\Delta_2\Gamma=T_1(\Delta_2\Gamma)\Delta_1\Gamma$. Alternatively one can put
${\mathbb A}=2\Delta_1 H, \, {\mathbb B}=\Delta_2 H$ and get $\Delta_1 \Delta_2 H=T_2(\Delta_1 H)\Delta_2 H-T_1(\Delta_2 H)\Delta_1 H$
} 
`potential' $\Phi$
\begin{equation}
\label{P}
 {\mathbb A}-{\mathbb I}= - (T_1\Phi)  \Phi ^{-1}   \qquad {\mathbb B}+{\mathbb I}=  (T_2 \Phi) \Phi ^{-1}.
\end{equation}
Thus
\begin{equation}  \label{ala}
{\mathbb A} = - ( \Delta_1 \Phi ) \Phi^{-1} \ , \qquad {\mathbb B} = ( \Delta_2 \Phi ) \Phi^{-1} \ .
\end{equation} 
The potential $\Phi$ can be defined as
\begin{equation}
\label{M}
\Phi=\Psi(0). 
\end{equation}
Substitution (\ref{ala}) to equation (\ref{a}) gives
\begin{equation}
\label{sigma1}
T_1T_2\Phi (T_1\Phi^{-1}+T_2\Phi^{-1})=(T_1\Phi+T_2\Phi)\Phi^{-1}
\end{equation}
which is nothing but equation (\ref{che}).
To prove the integrability of (\ref{sigma1}) we will construct its Darboux-B\"acklund transformation. 

\subsection{Construction of a Darboux matrix}
We consider the Darboux transformation 
\begin{equation} \label{DT}
\tilde \Psi(\lambda) = {\mathbb D} (\lambda) \Psi(\lambda)
\end{equation} 
which leads to the Darboux-B\"acklund transformation
\begin{equation}  \label{DBT}
\tilde U(\lambda) = (T_1 {\mathbb D} (\lambda)) U(\lambda) {\mathbb D} (\lambda)^{-1} \ , \quad \tilde V(\lambda) = (T_2 {\mathbb D} (\lambda)) V (\lambda){\mathbb D} (\lambda)^{-1} \ ,
\end{equation} 
where ${\mathbb D} $ is usually refered to as Darboux matrix and depends on $m_1, m_2$ and $\lambda$. 

The normalization ${\mathbb N}$ of the Darboux matrix  is defined by 
\begin{equation} \label{normalization}
{\mathbb N} := \lim_{\lambda \rightarrow \infty} {\mathbb D} (\lambda) \ .
\end{equation} 
we tacitly assume that the limit exists.
First, we are going  
to check whether condition~(\ref{limy}) is preserved by the transformation. 
Taking into account (\ref{limy}) and (\ref{normalization}) we have
\begin{equation}  \begin{array}{l} 
\lim_{\lambda \rightarrow \infty} \tilde U (\lambda) = (T_1 {\mathbb N}) (\lim_{\lambda \rightarrow \infty}  U (\lambda) ) {\mathbb N}^{-1} = ( T_1 {\mathbb N} ) {\mathbb N}^{-1} \ ,  \\[2ex]
\lim_{\lambda \rightarrow \infty} \tilde V (\lambda) = (T_2 {\mathbb N}) (\lim_{\lambda \rightarrow \infty}  V (\lambda) ) {\mathbb N}^{-1} = ( T_2 {\mathbb N} ) {\mathbb N}^{-1} \ . 
\end{array} \end{equation} 
These limits are equal to ${\mathbb I}$ if and only if $T_1 {\mathbb N} = T_2 {\mathbb N} = {\mathbb N}$.
\begin{corollary}
If ${\mathbb N} = {\rm const}$, then the Darboux transformation (\ref{DBT}) preserves the constraints~(\ref{limy}).
\end{corollary}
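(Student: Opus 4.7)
The statement is essentially a one-step consequence of the computation displayed immediately before it, so the plan is to make that explicit rather than introduce new machinery.

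My approach would be to start directly from the two limits
\[
\lim_{\lambda\to\infty}\tilde U(\lambda) = (T_1{\mathbb N}){\mathbb N}^{-1},\qquad
\lim_{\lambda\to\infty}\tilde V(\lambda) = (T_2{\mathbb N}){\mathbb N}^{-1},
\]
which were derived under the standing assumptions that the normalization limit (\ref{normalization}) exists and that $U$, $V$ satisfy (\ref{limy}). The hypothesis ${\mathbb N}=\mathrm{const}$ means $T_1{\mathbb N}={\mathbb N}$ and $T_2{\mathbb N}={\mathbb N}$, so both right-hand sides collapse to ${\mathbb N}{\mathbb N}^{-1}={\mathbb I}$.

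Thus I would conclude $\lim_{\lambda\to\infty}\tilde U(\lambda)={\mathbb I}$ and $\lim_{\lambda\to\infty}\tilde V(\lambda)={\mathbb I}$, which is precisely the statement that the transformed matrices $\tilde U$, $\tilde V$ again satisfy (\ref{limy}). Since the corollary asserts only the sufficiency of ${\mathbb N}=\mathrm{const}$, no converse direction needs to be addressed; the remark right before the statement already notes that the two limits equal ${\mathbb I}$ \emph{if and only if} $T_1{\mathbb N}=T_2{\mathbb N}={\mathbb N}$, and the forward implication is trivially given by the constancy hypothesis.

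There is no real obstacle here: the only things to check are that the formal manipulations passing the limit $\lambda\to\infty$ through the product $(T_j{\mathbb D}(\lambda))U(\lambda){\mathbb D}(\lambda)^{-1}$ are justified (guaranteed by the tacit assumption that $\lim_{\lambda\to\infty}{\mathbb D}(\lambda)={\mathbb N}$ exists and is invertible, so $\lim_{\lambda\to\infty}{\mathbb D}(\lambda)^{-1}={\mathbb N}^{-1}$), and that the shift operators $T_j$ commute with the limit in $\lambda$ since $T_j$ acts on the discrete variables $m_1,m_2$ only. Both are standard, so the corollary follows in a single line.
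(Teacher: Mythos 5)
Your argument is correct and coincides with the paper's own reasoning: the displayed limits $\lim_{\lambda\to\infty}\tilde U(\lambda)=(T_1{\mathbb N}){\mathbb N}^{-1}$ and $\lim_{\lambda\to\infty}\tilde V(\lambda)=(T_2{\mathbb N}){\mathbb N}^{-1}$ computed just before the corollary, combined with ${\mathbb N}=\mathrm{const}$, immediately give ${\mathbb I}$, which is exactly how the paper establishes the statement. Your additional remarks on interchanging the limit with the shifts and with the product are fine but not needed beyond the tacit assumptions already made in the text.
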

We assume throughout the paper ${\mathbb N} = {\mathbb I}$.

There is another, more fundamental, constraint to be preserved by the transformation~(\ref{DBT}), namely, the dependence of  $U, V$  
on $\lambda$ (the divisors of poles of $U$ and $V$). 
We adapt the classical results on $N$-pole Darboux matrix 
 for continuous $GL(n)$ principal $\sigma$-model to the discrete $GL(n)$ principal $\sigma$-model
(\ref{che}), 
i.e.
we confine ourselves to the following form of ${\mathbb D} $ and its inverse ${\mathbb D}^{-1}$
\begin{equation}  \label{multiD}
 {\mathbb D} = {\mathbb I} + \sum_{i=1}^N \frac{\lambda_i P_i}{\lambda - \lambda_i} \ , \qquad {\mathbb D}^{-1} = {\mathbb I} + \sum_{j=1}^N \frac{\mu_j F_j}{\lambda - \mu_j} \ ,\qquad  \forall i,j \, : \, \lambda_i\ne\mu_j, \quad \lambda_i, \, \mu_i \neq 1
\end{equation} 
where the rank of all matrices $P_j$, $F_j$ is the same and equals $k$, i.e., 
\begin{equation}  \label{ketAB}
P_j = \ket{p_{1j}} \bra{p_{2j}} \ , \qquad F_j = \ket{f_{1j}} \bra{f_{2j}} \ ,
\end{equation} 
where ``bras'' are $k\times n$ matrices of rank $k$
while ``kets'' are $n \times k$ matrices of rank $k$.

The conditions ${\mathbb D} {\mathbb D}^{-1} = {\mathbb D}^{-1} {\mathbb D} = {\mathbb I}$ impose the following constraints:
\begin{equation}  \begin{array}{l}  \label{wiezy1}
 \bra{p_{2i}} {\mathbb D}^{-1} (\lambda_i) = 0 \ , \qquad {\mathbb D} (\mu_i) \ket{f_{1i}} = 0 \ , \\[2ex]
{\mathbb D}^{-1} (\lambda_i) \ket{p_{1i}} = 0 \ , \qquad \bra{f_{2i}} {\mathbb D} (\mu_i) = 0 \ ,
\end{array} \end{equation} 
which allow us to express $\ket{p_{1i}}$ and $\bra{f_{2i}}$ in terms of $\ket{f_{1i}}$ and $\bra{p_{2i}}$.

To ensure  that the transformation (\ref{DBT}) preserves divisors of poles of $U$ and $V$,
 we demand that the residua at $\lambda_i$, $\mu_i$ ($i=1,\ldots,N$) 
of the right hand sides of (\ref{DBT}) vanish:
\begin{equation}  \begin{array}{l}  \label{DBT-eqs} 
T_1 (P_i) U (\lambda_i) {\mathbb D}^{-1} (\lambda_i) = 0 \ , \\[2ex]
T_2 (P_i) V (\lambda_i) {\mathbb D}^{-1} (\lambda_i) = 0 \ , \\[2ex]
T_1 ({\mathbb D} (\mu_i) ) U (\mu_i) F_i = 0 \ , \\[2ex]
T_2 ({\mathbb D} (\mu_i) ) V (\mu_i) F_i = 0 \ .
\end{array} \end{equation}

The following theorem takes place in full analogy with the continuous case \cite{MiZa,Jan2,Jan3}:
\begin{theorem}  \label{Darb}
If $\bra{p_{2i}} = \bra{p_{0i}} \Psi^{-1} (\lambda_i) $ and $\ket{f_{1i}} = \Psi (\mu_i)\ket{f_{0i}} $, where $\Psi (\lambda)$ 
is a solution of the linear problem (\ref{Lp}) 
and $\bra{p_{0i}}$ and $\ket{f_{0i}} $ are respectivelly constant
$k\times n$ matrices of rank $k$
and $n \times k$ matrices of rank $k$ and $\ket{p_{1i}}$ and $\bra{f_{2i}}$ are given via formulae (\ref{wiezy1}),
then the equations
(\ref{DBT-eqs}) are satisfied.
\end{theorem}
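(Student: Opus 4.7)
The plan is to verify the four equations in (\ref{DBT-eqs}) by direct computation, exploiting the Lax evolution of $\Psi$ to rewrite each shifted quantity in a form that aligns with the kernel/cokernel conditions (\ref{wiezy1}). The key observation is that the ansatz for $\bra{p_{2i}}$ and $\ket{f_{1i}}$ has been tailored precisely so that the shift operators $T_1, T_2$ act on them by multiplication by $U^{\pm 1}$ and $V^{\pm 1}$ evaluated at the spectral points.

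First I would treat the ``bra'' equations. From $T_1 \Psi(\lambda) = U(\lambda) \Psi(\lambda)$ we have $T_1(\Psi^{-1}(\lambda)) = \Psi^{-1}(\lambda) U^{-1}(\lambda)$, so
\[
T_1 \bra{p_{2i}} = \bra{p_{0i}} \Psi^{-1}(\lambda_i) U^{-1}(\lambda_i) = \bra{p_{2i}} U^{-1}(\lambda_i),
\]
equivalently $(T_1 \bra{p_{2i}}) U(\lambda_i) = \bra{p_{2i}}$. Using $P_i = \ket{p_{1i}}\bra{p_{2i}}$ this gives
\[
T_1(P_i) \, U(\lambda_i) \, {\mathbb D}^{-1}(\lambda_i) = T_1(\ket{p_{1i}}) \, \bra{p_{2i}} \, {\mathbb D}^{-1}(\lambda_i) = 0,
\]
the last equality being the first identity of (\ref{wiezy1}). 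The second equation of (\ref{DBT-eqs}) follows by an identical argument with $T_1, U$ replaced by $T_2, V$, using $T_2 \Psi = V \Psi$.

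Next I would treat the ``ket'' equations dually. From $\ket{f_{1i}} = \Psi(\mu_i)\ket{f_{0i}}$ and the Lax equation we obtain $T_1 \ket{f_{1i}} = U(\mu_i)\ket{f_{1i}}$, hence $U(\mu_i) F_i = T_1(\ket{f_{1i}}) \bra{f_{2i}}$. Premultiplying by $T_1({\mathbb D}(\mu_i))$ and pulling the shift outside,
\[
T_1({\mathbb D}(\mu_i))\, U(\mu_i)\, F_i = T_1\bigl({\mathbb D}(\mu_i) \ket{f_{1i}}\bigr)\bra{f_{2i}} = 0,
\]
where the final step is the second identity of (\ref{wiezy1}). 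The fourth equation of (\ref{DBT-eqs}) is established in the same way with $T_2$ and $V$.

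The proof is essentially bookkeeping and I do not expect a genuine obstacle; the content is entirely in the choice of $\bra{p_{2i}}, \ket{f_{1i}}$ via $\Psi$. The only point of care is that the arguments require $U(\lambda_i), V(\lambda_i), U(\mu_i), V(\mu_i)$ and ${\mathbb D}^{-1}(\lambda_i), {\mathbb D}(\mu_i)$ to be well defined, which is guaranteed by the standing assumptions $\lambda_i, \mu_i \ne 1$ (so we avoid the pole of $U$), $\lambda_i, \mu_i \ne -1$ (pole of $V$), and $\lambda_i \ne \mu_j$ for all $i,j$ in (\ref{multiD}). Under these genericity conditions the definitions of $\ket{p_{1i}}$ and $\bra{f_{2i}}$ via (\ref{wiezy1}) are consistent, and the calculation above closes.
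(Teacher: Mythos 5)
Your proof is correct and follows essentially the same route as the paper's: substitute the ansatz for $\bra{p_{2i}}$, $\ket{f_{1i}}$, use the Lax relations $T_1\Psi=U\Psi$, $T_2\Psi=V\Psi$ at $\lambda_i$, $\mu_i$ to absorb $U^{\pm1}$, $V^{\pm1}$, and then kill the expression with (\ref{wiezy1}) (shifted by $T_1$ or $T_2$ for the ket equations, which is legitimate since (\ref{wiezy1}) holds at every lattice site); the paper writes out only the first equation and asserts the rest analogously, so your explicit treatment of the dual cases and the genericity remark about the poles of $U$ and $V$ is just a slightly fuller version of the same argument.
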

\begin{proof}
The proof is straightforward. It is enough to substitute
$\bra{p_{2i}} = \bra{p_{0i}} \Psi^{-1} (\lambda_i) $ and $\ket{f_{1i}} = \Psi (\mu_i)\ket{f_{0i}} $
to (\ref{DBT-eqs})
and 
to take into account $T_1 \Psi (\lambda) = U (\lambda) \Psi (\lambda)$, $T_2 \Psi (\lambda) = V(\lambda) \Psi (\lambda)$ evaluated at $\lambda=\lambda_i$ 
or $\lambda=\mu_i$. The resulting expression vanishes by virtue of (\ref{wiezy1}). E.g. for the first equation of 
(\ref{DBT-eqs}) we have
\[
T_1 (P_i) U (\lambda_i) {\mathbb D}^{-1} (\lambda_i) = (T_1 \ket{p_{1i}}) (T_1 \bra{p_{2i}}) U (\lambda_i) {\mathbb D}^{-1} (\lambda_i)=\]
\[
=(T_1 \ket{p_{1i}}) \bra{p_{0i}} (T_1 \Psi^{-1} (\lambda_i)) U (\lambda_i) {\mathbb D}^{-1} (\lambda_i)=(T_1 \ket{p_{1i}}) \bra{p_{0i}}  \Psi^{-1} (\lambda_i) {\mathbb D}^{-1} (\lambda_i)=
\]
\[=
(T_1 \ket{p_{1i}}) \bra{p_{2i}} {\mathbb D}^{-1} (\lambda_i)=0.\]
\end{proof}

\begin{remark}
The factors  $ \ket{p_{1i}}$ , $\bra{p_{2i}}$,   $ \ket{f_{1i}}$ and
$\bra{f_{2i}}$ in (\ref{ketAB}) are not uniquely defined. Indeed, a transformation
\[
\ket{p_{1i}}\mapsto\ket{p_{1i}}W_i^{-1}, \  \bra{p_{2i}}\mapsto W_i\bra{p_{2i}},\quad  \ket{f_{1i}}\mapsto\ket{f_{1i}}\hat W_i^{-1}, \  \bra{f_{2i}}\mapsto \hat W_i\bra{f_{2i}},                                                                                \]
where $W_i,\hat W_i$ are any invertible $k\times k$ matrices, does not change the matrices $P_i,\ F_i$. Now, it follows from (\ref{wiezy1}) and Theorem \ref{Darb}) that the dressing Darboux matrix (\ref{multiD}) is parameterized by $2N$ points  $ \bra{p_{0i}}, \ \ket{f_{0i}} $ on the Grassmanian $G_{k,N}$ and by the set of $2N$ complex parameters $\lambda_i,\mu_i,\ \ i=1,\ldots,N$.
\end{remark}
\begin{remark}
 In (\ref{ketAB}) we assumed that all poles of the Darboux matrices in the spectral parameter $\lambda$ are simple, that the number of poles in ${\mathbb D}$ and ${\mathbb D}^{-1}$ coincide, and that the ranks of matrices $P_i,F_i$ are the same and do not depend on $i$. A more general Darboux dressing matrix can be easily constructed without the above assumptions and in the same way but it would make the paper less readable.
\end{remark}

\subsection{Darboux matrix with two poles} 
\label{forth}
Since in the case of reduction $\Phi^2 = {\mathbb I}$ considered in next section it is sufficient to consider
Darboux matrix with two poles we discuss the case $N=2$ in detail.
In the case $N=2$ Darboux transformation, Darboux matrix and its inverse take respectively forms
\begin{equation} \label{Dt2}
\tilde{\Psi}(\lambda;\lambda_1,\lambda_2)= {\mathbb D} (\lambda;\lambda_1,\lambda_2)\Psi(\lambda).
\end{equation} 
\begin{equation}
\label{D-two}
{\mathbb D} (\lambda) := {\mathbb I} +\frac{\lambda_1 {\mathbb P}}{\lambda-\lambda_1}+
\frac{\lambda_2 {\mathbb Q}}{\lambda - \lambda_2},
\end{equation}
\begin{equation}
\label{Dinv-two}
{\mathbb D}^{-1}(\lambda) := {\mathbb I} +\frac{\mu_1 {\mathbb F}}{\lambda-\mu_1}+
\frac{\mu_2 {\mathbb G}}{\lambda - \mu_2}.
\end{equation}
Taking into account ${\mathbb D} {\mathbb D}^{-1} = {\mathbb D}^{-1} {\mathbb D} = {\mathbb I}$, we arrive at the equations
\begin{equation}
\label{PQGF}
\begin{array}{l}
{\mathbb P} ({\mathbb I}+ \frac{\mu_1}{\lambda_1-\mu_1} {\mathbb F}+ \frac{\mu_2}{\lambda_1 - \mu_2} {\mathbb G})=0\\
{\mathbb Q} ({\mathbb I}+ \frac{\mu_1}{\lambda_2 - \mu_1} {\mathbb F}+ \frac{\mu_2}{\lambda_2 - \mu_2} {\mathbb G})=0\\
({\mathbb I}+ \frac{\lambda_1}{\mu_1-\lambda_1} {\mathbb P} + \frac{\lambda_2}{\mu_1 - \lambda_2} {\mathbb Q}){\mathbb F} =0\\
({\mathbb I}+ \frac{\lambda_1}{\mu_2 - \lambda_1} {\mathbb P} + \frac{\lambda_2}{\mu_2 - \lambda_2} {\mathbb Q}){\mathbb G} =0.\\
\end{array}
\end{equation}
We assume that  all matrices ${\mathbb P}, {\mathbb Q}, {\mathbb F}, {\mathbb G}$ are of the same rank $k>0$ 
so can be written as 
\[{\mathbb P}=\ket{p_1} \bra{p_2},\, {\mathbb Q}=\ket{q_1} \bra{q_2},\, {\mathbb F}=\ket{f_1} \bra{f_2},\,{\mathbb G}=\ket{g_1} \bra{g_2},\]
where ``bras'' are $k\times n$ matrices of rank $k$
while ``kets'' are $n \times k$ matrices of rank $k$.  
Equations~(\ref{PQGF}) take form
\begin{equation}
\label{pqgf1}
\begin{array}{l}
  \bra{p_2} + \frac{\mu_1}{\lambda_1-\mu_1}  \braket{p_2}{f_1} { \bra{f_2}}+ \frac{\mu_2}{\lambda_1 - \mu_2} \braket{p_2}{g_1} 
{ \bra{g_2}} =0\\
  \bra{q_2} + \frac{\mu_1}{\lambda_2 - \mu_1}
\braket{q_2}{f_1} { \bra{f_2}}+ \frac{\mu_2}{\lambda_2 - \mu_2}  \braket{q_2}{g_1} { \bra{g_2}} =0 \\
\ket{f_1} + \frac{\lambda_1}{\mu_1-\lambda_1} { \ket{p_1}} \braket{p_2}{f_1} + \frac{\lambda_2}{\mu_1 - \lambda_2} 
{ \ket{q_1}}  \braket{q_2}{f_1}   =0\\
\ket{g_1} + \frac{\lambda_1}{\mu_2 - \lambda_1} { \ket{p_1}} \braket{p_2}{g_1} + \frac{\lambda_2}{\mu_2 - \lambda_2} { \ket{q_1}} 
\braket{q_2}{g_1}  =0, \\
\end{array}
\end{equation}
where $\braket{p_2}{f_1}$, $\braket{p_2}{g_1}$, $\braket{q_2}{f_1}$ and $\braket{p_2}{f_1}$ are matrices $k\times k$.
The system (\ref{pqgf1})
can be solved  with respect to  $\ket{p_1}$, $\ket{q_1}$, $\bra{f_2}$ and $\bra{g_2}$,
therefore
 $\ket{p_1}$, $\ket{q_1}$, $\bra{f_2}$ and $\bra{g_2}$ are given in terms of $\bra{p_2}$, $\bra{q_2}$, $\ket{f_1}$ and $\ket{g_1}$.
The latter set, according to theorem~\ref{Darb}, evolves in a simple way and can be integrated, namely
$\bra{p_2}=\bra{p_{20}} \psi (\lambda_1)$, $\bra{q_2}=\bra{q_{20}}\psi (\lambda_2)$, 
$\ket{f_1}=\ket{f_{10}} \psi (\mu_1)$ and $\ket{g_1}=\ket{g_{10}}\psi (\mu_2)$
where $\bra{p_{20}}$, $\bra{q_{20}}$, $\ket{f_{10}}$ and $\ket{g_{10}}$ are constant bras and kets.
Upon the observation that in equation (\ref{pqgf1}) $\bra{p_2}$ and $\bra{q_2}$ are combinations of $\ket{f_2}$, $\ket{g_2}$  only,
 while  
$\ket{f_1}$ and $\ket{g_1}$ are  combinations of $\bra{p_1}$ and $\bra{q_1}$ only, 
one can obtain formulas for  $\ket{p_1}$, $\ket{q_1}$, $\bra{f_1}$, $\bra{g_1}$ by inverting $2\times 2$ block matrices.
We will need here only formulas for  $\ket{p_1}$, $\ket{q_1}$ which are
{\tiny
\begin{equation} 
\label{greek1}
\begin{array}{l}
\ket{p_1}= \frac{1}{\lambda_1}
\left((\mu_1-\lambda_2)\ket{f_1} \braket{q_2}{f_1}^{-1}-(\mu_2-\lambda_2)\ket{g_1} \braket{q_2}{g_1}^{-1} \right) 
\left( \frac{\mu_2-\lambda_2}{\mu_2-\lambda_1}  \braket{p_2}{g_1} \braket{q_2}{g_1}^{-1}-
       \frac{\mu_1-\lambda_2}{\mu_1-\lambda_1}  \braket{p_2}{f_1} \braket{q_2}{f_1}^{-1}          \right)^{-1} ,\\
\ket{q_1}= \frac{1}{\lambda_2}
\left((\mu_1-\lambda_1)\ket{f_1} \braket{p_2}{f_1}^{-1}-(\mu_2-\lambda_1)\ket{g_1} \braket{p_2}{g_1}^{-1} \right) 
\left( \frac{\mu_2-\lambda_1}{\mu_2-\lambda_2}  \braket{q_2}{g_1} \braket{p_2}{g_1}^{-1}-
       \frac{\mu_1-\lambda_1}{\mu_1-\lambda_2}  \braket{q_2}{f_1} \braket{p_2}{f_1}^{-1}\right)^{-1} ,
\end{array}
\end{equation} }
 formulas for $\bra{f_2}$ and $\bra{g_2}$ will not be used and we omit them.

\section{Involutive reduction $\Phi^2={\mathbb I}$}
\label{sec3}
As far as the reduction $\Phi^2={\mathbb I}$ is concerned we are guided by the results obtained in the continuous case (see \cite{MiZa}). 
The first observation is that
the constraint $\Phi^2={\mathbb I}$ is valid reduction of (\ref{che}) for it is preserved under propagation of $\Phi$ by means of equation (\ref{che}). Indeed,
if at some point of the lattice $\Phi^2={\mathbb I}$ and at neighboring points $T_1\Phi^2={\mathbb I}$ and $T_2\Phi^2={\mathbb I}$ then due to (\ref{che})
we have
\begin{equation} 
\begin{array}{l}
 T_1T_2\Phi^2= 
 \left( T_1\Phi +T_2 \Phi \right) \Phi^{-1}\left( T_1\Phi^{-1} +T_2 \Phi^{-1} \right)^{-1}\left( T_1\Phi +T_2 \Phi \right) \Phi^{-1}\left( T_1\Phi^{-1} +T_2 \Phi^{-1} \right)^{-1}= \\
 \left( T_1\Phi +T_2 \Phi \right) \Phi\left( T_1\Phi +T_2 \Phi \right)^{-1}\left( T_1\Phi +T_2 \Phi \right) \Phi \left( T_1\Phi +T_2
 \Phi \right)^{-1}= 
 {\mathbb I}
 \end{array}
\end{equation} 
The second observation is  that 
if the constraint
\begin{equation}
\label{con-p2}
\Phi^2= {\mathbb I}  
\end{equation}
holds then the matrices
\[ U(\lambda):=\frac{\lambda {\mathbb I} - (T_1\Phi)\Phi ^{-1}}{\lambda -1} ,\qquad 
   V(\lambda):=\frac{\lambda {\mathbb I} + (T_2\Phi)\Phi ^{-1}}{\lambda +1}\] 
   of the Lax pair
 \begin{equation} \label{Lpl}
 T_1\Psi(\lambda) = U (\lambda) \Psi(\lambda), \qquad T_2 \Psi(\lambda) = V (\lambda) \Psi(\lambda) 
\end{equation}  
has the property
\begin{equation} \label{property}
U(\lambda^{-1})=(T_1\Phi)U(\lambda)\Phi ^{-1},\qquad 
  V(\lambda^{-1})=(T_2\Phi)V(\lambda)\Phi ^{-1}.
\end{equation}

Due to  the property (\ref{property}),  matrix $\Phi\Psi(\lambda^{-1})$ is a fundamental solution of the Lax pair
(\ref{Lpl})
thus we have
\begin{equation}
\label{con-p}
\Psi  (\lambda^{-1})= \Phi \Psi (\lambda) {\mathbb S} \, 
\end{equation}
valid for every $\lambda \in \bar{\mathbb C}$ and where ${\mathbb S}$
is an invertible constant matrix. In general matrix ${\mathbb S}$ can depend on $\lambda$,
but we shall see that assumption that ${\mathbb S}$ does not on depends on $\lambda$ leads to nontrivial Darboux transformation.
We will usually choose such a fundamental solution of $\Psi (\lambda)$ such that  $\Psi (0)=\Phi$.
In the later case evaluating (\ref{con-p}) at $\lambda=0$ and $\lambda=\infty$
we get 
\begin{equation}
\label{norm}
\Psi (\infty)= {\mathbb S}, \quad  \Psi (\infty) {\mathbb S} = {\mathbb I},
\end{equation}
i.e.
\[{\mathbb S}^2={\mathbb I}.\]
After above preliminary considerations we are ready to construct Darboux transformation preserving constraint $\Phi^2 ={\mathbb I}$.

\subsection{The Darboux-B\"acklund transformation preserving constraint $\Phi^2 ={\mathbb I}$}\label{sec31}
Now we demand the constraints (\ref{con-p}),
to be preserved under the Darboux-B\"acklund transformation 
\begin{equation} \label{Dtp}
\tilde{\Psi }(\lambda)={\mathbb D} (\lambda) \Psi  (\lambda)  \ ,
\end{equation} 
i.e. 
\begin{equation} \label{con-pt}
\tilde{\Psi }(\lambda^{-1})=\tilde{\Phi} \tilde{\Psi } (\lambda) \tilde{\mathbb S} .
\end{equation} 
Due to chosen normalization of the Darboux matrix $D(\infty)={\mathbb I}$ we get from (\ref{Dtp}) $\tilde{\Psi }(\infty)=\Psi  (\infty)$
and due to (\ref{norm})  $\tilde{\mathbb S} ={\mathbb S}$ holds.
We immediately (by evaluating (\ref{con-pt}) at $\lambda=\infty$) get that $\tilde{\Psi } (0)=\tilde{\Phi} $
and we arrive at the following constraint on the Darboux matrix ${\mathbb D} $: 
\begin{equation}  \label{D-reduction1}
{\mathbb D} (\lambda^{-1}) \Phi = {\mathbb D} (0) \Phi {\mathbb D} (\lambda).
\end{equation}
Conversely if $D(\lambda)$ obey the constraint  (\ref{D-reduction1}) then matrix $\tilde{\Psi } (\lambda)$ given by (\ref{Dtp})
satisfies (\ref{con-pt}) where $ \tilde{\Phi}= \tilde{\Psi }(0) $ and as a consequence
\begin{equation}
\label{z2}
({\mathbb D} (0) \Phi)^2 = {\mathbb I} \ 
\end{equation}
i.e.
\begin{equation}
\tilde{\Phi}^2={\mathbb I}
\end{equation}

From (\ref{D-reduction1}) it follows that the set of poles of ${\mathbb D} (\lambda)$ coincides with the set of poles 
of ${\mathbb D} (1/\lambda)$. In the case of the two-pole Darboux matrix (\ref{D-two}) it means that either $\lambda_1 \lambda_2 = 1$ 
or $\lambda_1^2 = \lambda_2^2 = 1$. We confine ourselves to the first (generic) case.
The same considerations applies to the inverse 
of ${\mathbb D} $,  see (\ref{Dinv-two}). We also confine ourselves to the case $\mu_1 \mu_2 = 1$. 
Therefore we assume function~${\mathbb D} (\lambda)$ has two poles disposed symmetrically with respect to inversions in the unit sphere  
\begin{equation} \label{lao}
\lambda_1 = \lambda_0, \qquad \lambda_2 = \lambda_0^{-1}, \qquad \lambda_0 \neq 0,1
\end{equation}
i.e. 
\begin{equation}
\label{Dlm0}
{\mathbb D} (\lambda) := {\mathbb I} +\frac{\lambda_0 {\mathbb P}}{\lambda-\lambda_0}+
\frac{{\mathbb Q}}{\lambda \lambda_0-1},
\end{equation}
and its inverse is of the same form ($\mu_1 = \mu_0$, $\mu_2 = 1/\mu_0$, $\mu_0 \neq 0$), i.e., 
\begin{equation}
\label{Dlmi0}
{\mathbb D}^{-1}(\lambda) := {\mathbb I} +\frac{\mu_0 {\mathbb F}}{\lambda-\mu_0}+
\frac{{\mathbb G}}{\lambda \mu_0-1}.
\end{equation}
i.e.
\begin{equation} \label{muo}
\mu_1 = \mu_0, \qquad \mu_2 = \mu_0^{-1}, \qquad \mu_0 \neq 0,1.
\end{equation}

There are two simple conditions sufficient to satisfy (\ref{D-reduction1}): 
\begin{theorem}  \label{tw3}
If 
\begin{equation} \label{pq}
\bra{q_2} = \bra{p_2} \Phi , \qquad \ket{g_1} = \Phi \ket{f_1}, 
\end{equation}
then  constraint (\ref{D-reduction1}) is satisfied. 
\end{theorem}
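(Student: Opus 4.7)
The plan is to treat both sides of (\ref{D-reduction1}) as rational functions of $\lambda$ and to prove equality by removing poles and matching the value at $\lambda=\infty$. Concretely, I would introduce
\[
X(\lambda) \;:=\; {\mathbb D} (\lambda^{-1})\,\Phi\,{\mathbb D}^{-1}(\lambda),
\]
a rational function of $\lambda$ whose only possible poles lie in the finite set $\{\lambda_0,\lambda_0^{-1},\mu_0,\mu_0^{-1}\}$. Since ${\mathbb D} (\lambda)\to{\mathbb I} $ as $\lambda\to\infty$ by the normalization ${\mathbb N}={\mathbb I} $, one computes $X(\infty)={\mathbb D} (0)\,\Phi$. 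If $X$ turns out to be regular at each of the four candidate poles, then it extends holomorphically to all of $\bar{\mathbb C}$, hence is constant, and therefore equals ${\mathbb D} (0)\,\Phi$; multiplying on the right by ${\mathbb D} (\lambda)$ recovers exactly (\ref{D-reduction1}).

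The substance is then the four residue checks. A partial-fraction expansion of ${\mathbb D} (\lambda^{-1})$ obtained from (\ref{Dlm0}) shows that the residue at $\lambda_0$ is a nonzero multiple of ${\mathbb Q} =\ket{q_1}\bra{q_2}$, so the residue of $X$ there is proportional to $\ket{q_1}\bra{q_2}\,\Phi\,{\mathbb D}^{-1}(\lambda_0)$; invoking the hypothesis $\bra{q_2}=\bra{p_2}\Phi$ together with $\Phi^2={\mathbb I} $ collapses this to $\ket{q_1}\bra{p_2}\,{\mathbb D}^{-1}(\lambda_0)$, which vanishes by the first identity of (\ref{wiezy1}) at $i=1$. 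The residue at $\lambda_0^{-1}$ is handled symmetrically: it is proportional to $\ket{p_1}\bra{p_2}\,\Phi\,{\mathbb D}^{-1}(\lambda_0^{-1})=\ket{p_1}\bra{q_2}\,{\mathbb D}^{-1}(\lambda_0^{-1})$, which is zero by (\ref{wiezy1}) at $i=2$. The dual computations at $\mu_0$ and $\mu_0^{-1}$ rely on $\ket{g_1}=\Phi\ket{f_1}$: at $\mu_0$ the residue of $X$ reduces to a multiple of ${\mathbb D} (\mu_0^{-1})\,\Phi\,\ket{f_1}\bra{f_2}={\mathbb D} (\mu_0^{-1})\ket{g_1}\bra{f_2}$, which vanishes by the second identity of (\ref{wiezy1}) at $i=2$, and at $\mu_0^{-1}$ the residue becomes ${\mathbb D} (\mu_0)\,\Phi\ket{g_1}\bra{g_2}={\mathbb D} (\mu_0)\ket{f_1}\bra{g_2}$, which vanishes by (\ref{wiezy1}) at $i=1$.

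The main obstacle is not technical delicacy but conceptual bookkeeping: one must verify that the $\Phi$ sandwiched between ${\mathbb D} (\lambda^{-1})$ and ${\mathbb D}^{-1}(\lambda)$ in $X$ interchanges $\bra{p_2}\leftrightarrow\bra{q_2}$ and $\ket{f_1}\leftrightarrow\ket{g_1}$ in exactly the pattern dictated by the inversion-symmetric pole placements (\ref{lao}) and (\ref{muo}). Phrased differently, the two hypotheses of Theorem \ref{tw3} are calibrated so that the involution $\lambda\mapsto\lambda^{-1}$, which permutes the poles of ${\mathbb D} $ and ${\mathbb D}^{-1}$, lifts to the Darboux matrix as left multiplication by ${\mathbb D} (0)\,\Phi$; once this symmetry is recognized, each of the four residue cancellations is forced by the defining relations (\ref{wiezy1}), and the Liouville-type argument then finishes the proof.
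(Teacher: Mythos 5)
Your proof is correct and takes essentially the same route as the paper: the paper writes (\ref{D-reduction1}) as the rational identity (\ref{main}), notes that both sides agree as $\lambda\to\infty$, and demands that the residues at $\lambda_0,\lambda_0^{-1},\mu_0,\mu_0^{-1}$ vanish, its conditions (\ref{lammu}) being exactly your four residue checks after using (\ref{pq}) and $\Phi^2={\mathbb I}$. The only cosmetic difference is that you close the argument by an explicit Liouville-type statement for ${\mathbb D}(\lambda^{-1})\Phi{\mathbb D}^{-1}(\lambda)$ and quote (\ref{wiezy1}) directly, whereas the paper cites the equivalent two-pole relations (\ref{PQGF}).
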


{\it Proof:} 
Indeed, condition (\ref{D-reduction1}) ${\mathbb D} (\lambda^{-1}) \Phi {\mathbb D} (\lambda)^{-1}= {\mathbb D} (0) \Phi $ after substitution of assumed forms
of ${\mathbb D}$ (\ref{Dlm0}) and ${\mathbb D}^{-1}$ (\ref{Dlmi0}) reads
\begin{equation} \label{main}
\left({\mathbb I} +\frac{\lambda \lambda_0 {\mathbb P}}{1-\lambda \lambda_0}+
\frac{\lambda {\mathbb Q}}{ \lambda_0-\lambda}\right) \Phi \left({\mathbb I} +\frac{\mu_0 {\mathbb F}}{\lambda-\mu_0}+
\frac{{\mathbb G}}{\lambda \mu_0-1}\right)=\left({\mathbb I}-{\mathbb P}-{\mathbb Q}\right)\Phi
\end{equation} 
It is clear that residual at $\infty$ of left hand side and right hand side of (\ref{main}) are the same.
The remaining residue of left hand side at $\lambda_0$, $\lambda_0^{-1}$, $\mu_0$, $\mu_0^{-1}$ vanish iff
respectively
\begin{equation} \begin{array}{l} \label{lammu}
\ket{q_1}\bra{p_2}\left({\mathbb I} +\frac{\mu_0}{\lambda_0-\mu_0} {\mathbb F}+ \frac{1}{\lambda_0 \mu_0-1}{\mathbb G}\right)=0, \\ 
\ket{p_1}\bra{q_2}\left({\mathbb I} +\frac{\lambda_0 \mu_0}{1-\lambda_0\mu_0} {\mathbb F}+ \frac{\lambda_0}{\mu_0-\lambda_0}{\mathbb G}\right)=0, \\
\left({\mathbb I} +\frac{\lambda_0 {\mathbb P}}{\mu_0- \lambda_0}+ \frac{1}{ \lambda_0\mu_0 -1} {\mathbb Q} \right) \ket{f_1}\bra{g_2} =0\\
\left({\mathbb I} +\frac{\lambda_0 \mu_0 {\mathbb P}}{1-\lambda_0 \mu_0}+ \frac{\mu_0 }{ \lambda_0-\mu_0}{\mathbb Q}\right) \ket{g_1}\bra{f_2} =0
\end{array} \end{equation} 
hold, where we have used conditions (\ref{pq}).
The point is equations (\ref{lammu}) are satisfied due to the fact that we already made sure that equations (\ref{PQGF}) hold and equations
(\ref{PQGF}) in virtue~of~(\ref{lao}) and (\ref{muo}) take form
\begin{equation} \begin{array}{l} \label{mammu}
\ket{p_1}\bra{p_2}\left({\mathbb I} +\frac{\mu_0}{\lambda_0-\mu_0} {\mathbb F}+ \frac{1}{\lambda_0 \mu_0-1}{\mathbb G}\right)=0, \\ 
\ket{q_1}\bra{q_2}\left({\mathbb I} +\frac{\lambda_0 \mu_0}{1-\lambda_0\mu_0} {\mathbb F}+ \frac{\lambda_0}{\mu_0-\lambda_0}{\mathbb G}\right)=0, \\
\left({\mathbb I} +\frac{\lambda_0 {\mathbb P}}{\mu_0- \lambda_0}+ \frac{1}{ \lambda_0\mu_0 -1} {\mathbb Q} \right) \ket{f_1}\bra{f_2} =0\\
\left({\mathbb I} +\frac{\lambda_0 \mu_0 {\mathbb P}}{1-\lambda_0 \mu_0}+ \frac{\mu_0 }{ \lambda_0-\mu_0}{\mathbb Q}\right) \ket{g_1}\bra{g_2} =0
\end{array} \end{equation} 
and are clearly equivalent to (\ref{lammu}).

\hfill $\Box$

\medskip

To assure that constraint $\Phi^2={\mathbb I}$ is preserved under Darboux transformation we used relations
 $\bra{q_2} = \bra{p_2} \Phi $ and $\ket{g_1} = \Phi \ket{f_1}$,
 but the quantities involved in this relations,
by virtue of Theorem~\ref{Darb} obeys the following evolution
\begin{equation} \begin{array}{l} \label{evolution}
\bra{p_2} = \bra{p_{20}} \Psi ^{-1} (\lambda_0) \ ,  \quad \ket{f_1} = \Psi  (\mu_0) \ket{f_{10} } \\ 
\bra{q_2} = \bra{q_{20}} \Psi ^{-1} \left( \lambda_0^{-1} \right) \ , \quad 
\ket{g_1} = \Psi  \left( \mu_0^{-1} \right) \ket{g_{10} } \ ,
\end{array} \end{equation} 
where $\Psi  (\lambda)$ is the solution of the Lax pair, and 
$\bra{p_{20}}$, $\bra{q_{20}}$, $\ket{f_{10}}$, $\ket{g_{10}}$ are constant vectors (initial data), so
the questions arises whether these relations are compatible with the evolution? The answer is positive

Due to the reduction (\ref{con-p}) we have 
\begin{equation} 
\Psi  \left( \lambda_0^{-1} \right) = \Phi \Psi  (\lambda_0) {\mathbb S}\ , \quad  \Psi  \left( \mu_0^{-1} \right) = \Phi \Psi  (\mu_0) {\mathbb S}\ ,
\end{equation} 
which means that
\begin{equation} 
\bra{q_2} = \bra{q_{20}} {\mathbb S}\Psi ^{-1} (\lambda_0) \Phi \ ,  \quad \ket{g_1} = \Phi \Psi  (\mu_0) {\mathbb S}\ket{g_{10} } \ ,
\end{equation} 
and, as a conclusion we get that  
to satisfy relations in Lemma~\ref{tw3}  it is enough to set 
 $\bra{q_{20}} = \bra{p_{20}} {\mathbb S}$ and $\ket{g_{10}} = {\mathbb S}\ket{f_{10}}$.

We can now formulate the main conclusion of the article
\begin{corollary} \label{InvolutiveSolutions}
Having a solution $\Phi$ of system (\ref{chep2}) and a corresponding solution $\Psi(\lambda)$ of Lax pair  (\ref{Lp}),
by virtue of Theorem~\ref{Darb} and considerations above we have 
\begin{equation} \begin{array}{l} \label{p2}
\bra{p_2} = \bra{p_{20}} \Psi ^{-1} (\lambda_0) \ ,  \quad \ket{f_1} = \Psi  (\mu_0) \ket{f_{10} } \\ 
\end{array} \end{equation} 
where 
$\bra{p_{20}}$,  $\ket{f_{10}}$, are constant vectors (initial data).
Equations (\ref{greek1}) under the reduction considered in this section with
\begin{equation}  \label{molo}
  \lambda_1 =\lambda_0, \quad  \lambda_2 =\lambda_0^{-1}, \quad \mu_1=\mu_0, \quad \mu_2 =\mu_0^{-1}, \quad \lambda_0,\mu_0\neq 0,1 
\end{equation} 
and after introducing quantities
\begin{equation}  \label{xy}
 x := \frac{\braket{p_2}{f_1}}{\lambda_0 - \mu_0} \ , \qquad y := \frac{\bra{p_2}\Phi \ket{f_1}}{1 - \lambda_0 \mu_0} \ ,
\end{equation} 
take form
\begin{equation} \ket{p_1} = \lambda_0^{-1} \left(\ket{f_1} y^{-1}+ \mu_0^{-1} \Phi \ket{f_1} x^{-1} \right) (xy^{-1}-yx^{-1})^{-1}
\end{equation} 
\begin{equation} \ket{q_1} =- \left(\ket{f_1} x^{-1}+ \mu_0^{-1} \Phi \ket{f_1} y^{-1} \right) (xy^{-1}-yx^{-1})^{-1}. 
\end{equation} 
Then the Darboux matrix that respects the constraint $\Phi^2={\mathbb I}$ is
\[{\mathbb D} (\lambda) = {\mathbb I} +\frac{\lambda_0 \ket{p_1} \bra{p_2}}{\lambda-\lambda_0}+
\frac{\ket{q_1}\bra{p_2}\Phi}{\lambda \lambda_0-1}\]
and family of functions $\widetilde{\Phi}$ given by
\begin{equation} 
\widetilde{\Phi}=\Phi -\ket{p_1} \bra{p_2} \Phi-\ket{q_1}\bra{p_2}
\end{equation} 
is family of solutions of system (\ref{chep2}).
\end{corollary}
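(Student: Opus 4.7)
The plan is to assemble this corollary from the two structural results already proved: Theorem \ref{Darb} controls the $(m_1,m_2)$-evolution of $\bra{p_2}$ and $\ket{f_1}$, while Theorem \ref{tw3} encodes the involutive symmetry via $\bra{q_2}=\bra{p_2}\Phi$ and $\ket{g_1}=\Phi\ket{f_1}$. Two things then need to be checked: (i) the explicit formulas for $\ket{p_1}$ and $\ket{q_1}$ asserted in the corollary follow from the generic two-pole expressions (\ref{greek1}) after imposing (\ref{molo}) and introducing the blocks (\ref{xy}); (ii) the resulting $\widetilde{\Phi}={\mathbb D}(0)\Phi$ has the stated form and satisfies both $\widetilde{\Phi}^2=\mathbb{I}$ and equation (\ref{che}), so that it solves (\ref{chep2}).

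First I would carry out the algebraic reduction of (\ref{greek1}). Under (\ref{molo}) each difference $\mu_j-\lambda_i$ factors into $\pm(\lambda_0-\mu_0)$ or $\pm(1-\lambda_0\mu_0)$ times a monomial in $\lambda_0,\mu_0$; for instance $\mu_1-\lambda_2=-(1-\lambda_0\mu_0)/\lambda_0$ and $\mu_2-\lambda_1=(1-\lambda_0\mu_0)/\mu_0$. Simultaneously, using $\bra{q_2}=\bra{p_2}\Phi$, $\ket{g_1}=\Phi\ket{f_1}$ and $\Phi^2=\mathbb{I}$, the four scalar blocks in (\ref{greek1}) collapse to two: $\braket{p_2}{f_1}=\braket{q_2}{g_1}=(\lambda_0-\mu_0)x$ and $\bra{p_2}\Phi\ket{f_1}=\braket{q_2}{f_1}=\braket{p_2}{g_1}=(1-\lambda_0\mu_0)y$, where $x,y$ are as in (\ref{xy}). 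Inserting these, the factors $(\lambda_0-\mu_0)$ and $(1-\lambda_0\mu_0)$ cancel between numerator and the inverses of $\braket{q_2}{f_1}$, $\braket{q_2}{g_1}$; the inner bracket reduces to $\lambda_0^{-1}(yx^{-1}-xy^{-1})$, and after replacing $\ket{g_1}$ by $\Phi\ket{f_1}$ one recovers exactly the compact formulas asserted for $\ket{p_1}$ and $\ket{q_1}$. This is the main obstacle of the proof: although purely mechanical, it demands careful bookkeeping of the factors of $\lambda_0$, $\mu_0$ and of the signs coming from inverting the differences.

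With $\ket{p_1}$, $\ket{q_1}$, $\bra{p_2}$ determined and $\bra{q_2}=\bra{p_2}\Phi$, the Darboux matrix (\ref{Dlm0}) takes the form displayed in the corollary. Evaluating at $\lambda=0$ gives ${\mathbb D}(0)=\mathbb{I}-\ket{p_1}\bra{p_2}-\ket{q_1}\bra{p_2}\Phi$; since $\Phi=\Psi(0)$, the new potential is $\widetilde{\Phi}=\widetilde{\Psi}(0)={\mathbb D}(0)\Phi=\Phi-\ket{p_1}\bra{p_2}\Phi-\ket{q_1}\bra{p_2}$, where $\Phi^2=\mathbb{I}$ has been used on the last term. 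That $\widetilde{\Phi}^{\,2}=\mathbb{I}$ is immediate from (\ref{z2}); its premise, the constraint (\ref{D-reduction1}), is supplied by Theorem \ref{tw3} once the initial data are chosen so that $\bra{q_{20}}=\bra{p_{20}}\mathbb{S}$ and $\ket{g_{10}}=\mathbb{S}\ket{f_{10}}$, because (as shown at the end of Section \ref{sec31}) this choice propagates (\ref{pq}) consistently along the evolution (\ref{evolution}). Finally, since ${\mathbb D}$ has normalisation ${\mathbb N}=\mathbb{I}$ (Corollary 1), the Darboux-B\"acklund transformation (\ref{DBT}) maps the Lax pair of $\Phi$ to one of the same pole structure and normalisation for $\widetilde{\Phi}$, whose compatibility is precisely (\ref{che}); combined with $\widetilde{\Phi}^{\,2}=\mathbb{I}$, this yields (\ref{chep2}) for $\widetilde{\Phi}$, completing the argument.
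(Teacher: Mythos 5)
Your proposal is correct and follows essentially the same route the paper intends: Theorem \ref{Darb} for the evolution of $\bra{p_2}$, $\ket{f_1}$, Theorem \ref{tw3} (with $\bra{q_{20}}=\bra{p_{20}}\mathbb{S}$, $\ket{g_{10}}=\mathbb{S}\ket{f_{10}}$) for the constraint (\ref{D-reduction1}) and hence $\widetilde{\Phi}^2=\mathbb{I}$ via (\ref{z2}), together with the specialization of (\ref{greek1}) under (\ref{molo}) and (\ref{pq}). Your explicit reduction of the scalar blocks to $x$, $y$ and the resulting formulas for $\ket{p_1}$, $\ket{q_1}$ check out (signs and $\lambda_0,\mu_0$ factors included), supplying the algebra the paper leaves implicit.
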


\subsection{Solutions}\label{sec32}
We apply the procedure described in Corollary (\ref{InvolutiveSolutions}) to background solution~$\Phi$ of system (\ref{chep2})
\[\Phi =\begin{bmatrix} 0 & a^{m_1} b^{m_2} \\ \frac{1}{a^{m_1}}  \frac{1}{b^{m_2}} & 0 \end{bmatrix}.\]
where $a$ and $b$ are complex constants.
For this solution the Lax pair (\ref{Lp}) takes form
\[
T_1 \Psi (\lambda) = \begin{bmatrix}
\frac{\lambda -a}{\lambda -1} & 0 \\
 0 & \frac{a \lambda -1}{a (\lambda -1)} \\
\end{bmatrix}
\Psi (\lambda), \qquad 
T_2 \Psi (\lambda) = 
\begin{bmatrix}
\frac{b+\lambda }{\lambda +1} & 0 \\
 0 & \frac{b \lambda +1}{b (\lambda +1)} \\
\end{bmatrix}
\Psi(\lambda),
\]
from which we get the following fundamental solution
\[
\Psi(\lambda)=
\begin{bmatrix}
 0 & \left(\frac{\lambda-a}{\lambda-1}\right)^{m_1} \left(\frac{b+\lambda}{\lambda+1}\right)^{m_2} \\
 \left(\frac{1-a \lambda}{a-a \lambda}\right)^{m_1} \left(\frac{b \lambda+1}{b
   \lambda+b}\right)^{m_2} & 0 
\end{bmatrix}.
\]
Taking 
\[ \bra{p_{20}}= [c_2, c_1], \qquad \ket{f_{10} } =\begin{bmatrix}c_4 \\ c_3 \end{bmatrix}, \]
where $c_i$ are arbitrary complex constants we obtain from procedure described in Corollary~(\ref{InvolutiveSolutions})
 solution $\widetilde{\Phi}$ of system (\ref{chep2})
\[\widetilde{\Phi}=\frac{1}{D} \begin{bmatrix} M_{11} & M_{12} \\ M_{21} & -M_{11} \end{bmatrix}, \]
where
\[  D=
   \lambda_0 \mu_0  \left(\gamma + \delta -\alpha - \beta\right) \left(\alpha + \beta+\gamma +\delta \right) \]
\[  M_{12}=
   a^{m_1} b^{m_2} \left(\mu_0 \gamma + \lambda_0 \delta-\alpha -\lambda_0 \mu_0 \beta \right) \left(\alpha +\lambda_0 \mu_0 \beta+\mu_0 \gamma + \lambda_0 \delta \right)\]   
\[  M_{21}=
   a^{-m_1} b^{-m_2} \left(\lambda_0 \gamma + \mu_0 \delta -\lambda_0 \mu_0 \alpha - \beta\right) \left(\lambda_0 \mu_0 \alpha + \beta+\lambda_0 \gamma + \mu_0 \delta \right) \]     
\[
M_{11}=
 \lambda_0 \left(\mu_0^2-1\right) \left(\alpha \gamma - \beta \delta \right) +\mu_0 \left(\lambda_0^2-1\right) \left( \alpha \delta - \beta \gamma \right) .
\] 
   and where
  \[ \alpha := c_1 c_4 (\lambda_0-\mu_0)  \left(a \lambda_0-1\right)^{m_1} \left(a \mu_0-1\right)^{m_1} \left(b
   \lambda_0+1\right)^{m_2} \left(b \mu_0+1\right)^{m_2}\]
 \[ \beta := c_2 c_3 (\lambda_0-\mu_0)  \left(\lambda_0-a\right)^{m_1} \left(\mu_0-a\right)^{m_1} \left(b+\lambda_0\right)^{m_2} \left(b+\mu_0\right)^{m_2}\]
 \[\gamma:= c_1
   c_3 (\lambda_0 \mu_0-1) \left(a \lambda_0-1\right)^{m_1} \left(\mu_0-a\right)^{m_1} \left(b \lambda_0+1\right)^{m_2}
   \left(b+\mu_0\right)^{m_2} \]
\[\delta:= c_2 c_4 (\lambda_0 \mu_0-1) 
   \left(\lambda_0-a\right)^{m_1} \left(a \mu_0-1\right)^{m_1} \left(b+\lambda_0\right)^{m_2} \left(b \mu_0+1\right)^{m_2}. \]
In this particular case ($N=2$) we are able to find the large class of solutions that satisfy also unitary constraint.
\section{Unitary reduction $\Phi^{\dagger} \Phi ={\mathbb I}$ in the case $N=2$}
\label{unitary}
In general unitary constraint $\Phi^{\dagger} \Phi ={\mathbb I}$ is not compatible with equation (\ref{che}). However,
in the particular case $N=2$ set of matrices 
\[{\mathcal S} =\left\{ \begin{bmatrix} a & b +i c \\ b-i c & -a \end{bmatrix} \, | \, a,b,c \in {\mathbb R}, a^2+b^2+c^2=1\right\}\]
is not only unitary, hermitian and involutive at the same time
$\Phi^{-1}=\Phi^{\dagger}=\Phi$, but it follows that
anti-commutator of two elements of $ \mathcal S $ gives
\[
 \begin{array}{l} 
\begin{bmatrix} a_1 & b_1 +i c_1 \\ b_1-i c_1 & -a_1 \end{bmatrix} 
\begin{bmatrix} a_2 & b_2 +i c_2 \\ b_2-i c_2 & -a_2 \end{bmatrix}+ 
\begin{bmatrix} a_2 & b_2 +i c_2 \\ b_2-i c_2 & -a_2 \end{bmatrix} 
\begin{bmatrix} a_1 & b_1 +i c_1 \\ b_1-i c_1 & -a_1 \end{bmatrix} =
\\
=2 (a_1 a_2 + b_1 b_2+ c_1 c_2) \begin{bmatrix} 1 & 0 \\ 0 & 1 \end{bmatrix} 
\end{array}
\]
and therefore if at some point of the lattice $\Phi \in {\mathcal S}$ and at neighboring points $T_1 \Phi\in {\mathcal S}$ and $T_2\Phi\in {\mathcal S}$ then due to (\ref{che}) we have $(T_1T_2 \Phi)^2 ={\mathbb I}$  and moreover $(T_1T_2 \Phi)^{\dagger} (T_1T_2 \Phi)={\mathbb I}$.

Applying the procedure from section \ref{sec32} to seed solution (clearly $ \Phi^{\dagger} \Phi ={\mathbb I}$ and $ \Phi^2 ={\mathbb I}$) 
\[\Phi =\begin{bmatrix} 0 & e^{i a m_1} e^{i b m_2} \\ e^{-i  a m_1}  e^{-i b m_2} & 0 \end{bmatrix}\]
where $a$ and $b$ this time are real parameters,
it is enough to set
\[\begin{bmatrix}c_4 \\ c_3 \end{bmatrix}=\begin{bmatrix}{\overline c_2} \\ {\overline c_1} \end{bmatrix}, \qquad
\mu_0={\overline \lambda_0}\]
to guarantee that $ \tilde{\Phi}^{\dagger} \tilde{\Phi} ={\mathbb I}$ .
Therefore 
{\tiny 
\begin{equation} \tilde{\Phi}= 
\frac{1}{|\lambda_0|^2((s+\bar{s})^2+(r_1+r_2)^2)}
\begin{bmatrix} 
i r_1 [\Lambda s-\bar{\Lambda} \bar{s}]+i r_2 [\bar{\Lambda} s- \Lambda \bar{s}] & 
e^{i m_1 a} e^{i m_2 b} \left( (s+|\lambda_0|^2 \bar{s})^2+({\overline \lambda_0} r_1+\lambda_0 r_2)^2 \right)
\\ 
e^{-i m_1 a} e^{-i m_2 b}\left( (|\lambda_0|^2 s+ \bar{s})^2+(\lambda_0 r_1+ {\overline \lambda_0}r_2)^2 \right) &
-i r_1 [\Lambda s-\bar{\Lambda} \bar{s}]-i r_2 [\bar{\Lambda} s- \Lambda \bar{s}] \end{bmatrix},
\end{equation}
}
where 
\[\Lambda =\lambda_0 ({\overline \lambda_0}^2-1)\]
\[r_1=(|\lambda_0|^2-1) |c_1|^2 |e^{i a} \lambda_0-1|^{2m_1} |e^{i b} \lambda_0+1|^{2m_2} \]
\[r_2=(|\lambda_0|^2-1) |c_2|^2 |e^{i a} {\overline \lambda_0}-1|^{2m_1} |e^{i b} {\overline \lambda_0}+1|^{2m_2} \]
\[s=i({\overline \lambda_0}-\lambda_0) c_1 {\overline c_2} (e^{i a} \lambda_0-1)^{m_1}({\overline \lambda_0}-e^{-i a})^{m_1}
(e^{i b} \lambda_0+1)^{m_2}({\overline \lambda_0}+e^{-i b})^{m_2}
\]
is a family of solution of the system
\begin{equation}
\label{chep2pd}
\Delta_1 ((\Delta_2 \Phi) \Phi )+\Delta_2 ((\Delta_1 \Phi) \Phi )=0,\quad \Phi^2={\mathbb I}, \quad \Phi^{\dagger} \Phi={\mathbb I}.
\end{equation}
i.e. for the discrete  chiral field that takes values on two-dimensional sphere $S^2$.

\section{Conclusions}
We have constructed the Darboux transformation for the difference equation
{\tiny
\begin{equation} \label{EndEq}
\Phi(m_1+1,m_2+1)\left(\Phi^{-1}(m_1+1,m_2)+\Phi^{-1}(m_1,m_2+1)\right) =\left(\Phi (m_1+1,m_2)+\Phi (m_1,m_2+1)\right) \Phi^{-1}(m_1,m_2)
\end{equation}}
and then have reduced the transformation so that it preserves the involutive  constraints
\begin{equation} \label{EndC} \left(\Phi(m_1,m_2)\right)^2=1.\end{equation}
Unlike the continuous case, system (\ref{EndEq}-\ref{EndC}) in general is not compatible with the unitary constraint (\ref{u}). An open problem remains to find an integrable discretisation of the chiral field models (\ref{ContinuousR}) and (\ref{Continuous}) that admits reductions to classical Lie algebras  (or Lie groups) and further involutive reductions.
Moreover, to the best of our knowledge,  integrable discretisation of elliptic chiral field model (i.e. when in (\ref{ContinuousR}) $u=x+iy$ and $v=x-i y$  where 
$x,\, y$ are real independent variables)
has not been discussed in the literature, so far. In particular, integrable discretization of the original (i.e. elliptic) Ernst equation
is not known, yet. However, in the autonomous case one can replace the discrete operator in (\ref{iso-d}) with discrete Schr\"odinger
operator \cite{57S,NieSa,DoNie} or equivalently to apply the so called sublattice approach~\cite{sub45,sub7} to first equation of (\ref{iso-d}) to get its elliptic version.

\section*{Acknowledgements}
\parbox{.135\textwidth}{\begin{tikzpicture}[scale=.03]
\fill[fill={rgb,255:red,0;green,51;blue,153}] (-27,-18) rectangle (27,18);
\pgfmathsetmacro\inr{tan(36)/cos(18)}
\foreach \i in {0,1,...,11} {
\begin{scope}[shift={(30*\i:12)}]
\fill[fill={rgb,255:red,255;green,204;blue,0}] (90:2)
\foreach \x in {0,1,...,4} { -- (90+72*\x:2) -- (126+72*\x:\inr) };
\end{scope}}
\end{tikzpicture}} \parbox{.85\textwidth}
{This research is part of the project No. 2022/45/P/ST1/03998  co-funded by the National Science Centre and the European Union Framework Programme
 for Research and Innovation Horizon 2020 under the Marie Sklodowska-Curie grant agreement No. 945339. For the purpose of Open Access, the author has applied a CC-BY public copyright licence to any Author Accepted Manuscript (AAM) version arising from this submission. MN is beneficiary of the project.
 FWN was supported by the Foreign Expert Program of the Ministry of Sciences
and Technology of China, Grant No. G2023013065L.}

\bibliographystyle{unsrt}
\bibliography{DCh}

\end{document}